\documentclass[journal]{IEEEtran}
\normalsize
%
%
\usepackage{epstopdf}
\usepackage{cite}
  \usepackage[dvips]{graphicx}
  \graphicspath{{./figures/}}
\usepackage[cmex10]{amsmath}
\usepackage{amssymb}
\usepackage{setspace}
%
\usepackage{algorithmic}

%
\usepackage{array}

\usepackage{caption}
\usepackage{subcaption}
%
\usepackage{fixltx2e}
\usepackage{stfloats}
\usepackage{boxedminipage}

\usepackage{times,color}

\renewcommand{\P}{\mathbb{P}}
\newcommand{\mc}{\mathcal}
\newcommand{\mb}{\mathbb}
\newcommand{\ms}{\mathsf}
\newcommand{\E}{\mathbb{E}}

\newtheorem{theorem}{Theorem}

\newtheorem{definition}{Definition}

\newtheorem{corollary}{Corollary}

\hyphenation{op-tical net-works semi-conduc-tor}

\begin{document}

\title{Variable-Length Coding with Feedback: Finite-Length Codewords and Periodic Decoding}
\author{
\IEEEauthorblockN{Tsung-Yi Chen, Adam R. Williamson and Richard D. Wesel}\\
\IEEEauthorblockA{Department of Electrical Engineering\\
University of California, Los Angeles\\
Email: tychen@ee.ucla.edu; adamroyce@ucla.edu; wesel@ee.ucla.edu}%
\thanks{This research was supported by National Science Foundation Grant CIF CCF 1162501.}
}
\maketitle
\begin{abstract}
Theoretical analysis has long indicated that feedback improves the error exponent but not the capacity of single-user memoryless channels. Recently Polyanskiy et al. studied the benefit of variable-length feedback with termination (VLFT) codes in the non-asymptotic regime. In that work, achievability is based on an infinite length random code and decoding is attempted at every symbol. The coding rate backoff from capacity due to channel dispersion is greatly reduced with feedback, allowing capacity to be approached with surprisingly small expected latency. This paper is mainly concerned with VLFT codes based on {\em finite-length} codes and decoding attempts only at {\em certain specified decoding times}. The penalties of using a finite block-length $N$ and a sequence of periodic decoding times are studied. This paper shows that properly scaling $N$ with the expected latency can achieve the same performance up to constant terms as with $N = \infty$. The penalty introduced by periodic decoding times is a linear term of the interval between decoding times and hence the performance approaches capacity as the expected latency grows if the interval between decoding times grows sub-linearly with the expected latency.
\end{abstract}


\section{Introduction}
While feedback cannot increase the capacity of a memoryless channel, it can significantly reduce the complexity of encoding and decoding at rates below capacity. 
The error exponent results of \cite{Schalkwijk_1966_1, Schalkwijk_1966_2, Kramer_1969, Zig_1970, Nakiboglu_2008, Gallager_2010} suggest that feedback can be used to reduce the average block-length (or expected latency) required to approach capacity.  As a practical demonstration, \cite{Chen_2010_ITA} showed that using an incremental redundancy (IR) scheme with feedback allows short convolutional codes to deliver bit error rate performance comparable to a long-block-length turbo code, but with lower latency.  The demonstration of \cite{Chen_2010_ITA} qualitatively agrees with the error exponent analysis in \cite{Schalkwijk_1966_1, Schalkwijk_1966_2, Kramer_1969, Zig_1970, Nakiboglu_2008, Gallager_2010}. 

Because of its asymptotic perspective, the error exponent theory does not provide an accurate prediction of short-block-length performance. For example, Yamamoto and Itoh \cite{Yamamoto_1979} showed that the optimal Burnashev error exponent \cite{Burnashev_1976} is achievable by a two-phase ARQ coding scheme.  However, at short block-lengths (i.e. for a small average number of channel uses) a considerable performance gap exists between ARQ and a well-designed IR scheme. Polyanskiy et al. \cite{PolyIT11} analyzed the benefit of feedback in the non-asymptotic regime and provide quantitative characterizations for short expected latency. They show that capacity can be closely approached in hundreds of symbols rather than thousands using variable-length feedback codes with termination (VLFT codes), a form of IR. 

The analysis of VLFT in \cite{PolyIT11} assumes an underlying codebook with infinite-length codewords and decoding is attempted at every symbol so that the communication may be concluded after any given channel use. In practice, it may only be possible to use a codebook with finite-length codewords.  It may also be possible only to attempt decoding (and thus conclude communication) after channel uses that come at the end of a group of symbols because of packetization, decoding delays, and round-trip propagation times.  With these practical issues in mind, this paper studies the penalties that occur when the codebook is limited to finite-length codewords and/or decoding (and therefore termination) is only possible at periodic intervals rather than at every symbol.

\section{Previous Work and Main Results}
\label{sec:PreWorkMainContrib} 
\subsection{Previous Work}
\label{sec:PreWork}
We will consider discrete memoryless channels (DMC) throughout the paper and use the following notation: $x^n = (x_1, x_2, \dots, x_n)$ denotes an $n$-dimensional vector, $x_j$ the $j$th element of $x^n$, and $x_i^j$ the $i$th to $j$th elements of $x^n$. We denote random variables by capitalized letters unless otherwise stated. The input and output alphabets are $\mc{X}$ and $\mc{Y}$ respectively.  Let the input and output product spaces be $\ms{X} = \mc{X}^n, \ms{Y} = \mc{Y}^n$ respectively. A channel is characterized by a conditional distribution $P_{\ms{Y}|\ms{X}} = \prod_{i = 1}^n P_{Y_i|X_i}$ where the equality holds because the channel is memoryless. For codes that make use of a noiseless feedback link, we consider causal channels $\{P_{Y_i|X_1^iY^{i-1}}\}_{i = 1}^{\infty}$ and additionally focus on causal memoryless channels $\{P_{Y_i|X_i}\}_{i = 1}^{\infty}$. 

We are interested in zero-error communication with feedback in this paper and will therefore focus on the paradigm of VLFT coding. In order to be self-contained, we state the definition of VLFT codes in \cite{PolyIT11}:
\begin{definition}
An $(\ell,M,\epsilon)$ variable-length feedback code with termination (VLFT code) is defined as:
	\begin{enumerate} 
	\item A common random variable (r.v.) $U \in \mc{U}$ with a probability distribution $P_U$ revealed to both transmitter and receiver before the start of transmission.
	\item A sequence of encoders $f_n:\mc{U}\times\mc{W}\times \mc{Y}^{n-1} \rightarrow \mc{X}$ that defines the channel inputs $X_n = f_n(U,W,Y^{n-1})$.  Here $W$ is the message r.v. uniform in $\{1, \dots, M\}$.
	\item A sequence of decoders $g_n: \mc{U}\times\mc{Y}^n \rightarrow \mc{W}$ providing the estimate of $W$ at time $n$.
	\item A stopping time $\tau \in \mb{N}$ w.r.t. the filtration $\mc{F}_n = \sigma\{U, Y^n, W\}$ such that: 
	\begin{align}
	\E[\tau]\leq \ell.
	\end{align}
	\item The final decision $\hat{W} = g_\tau(U, Y^\tau)$ must satisfy:
	\begin{align}
	\P[\hat{W} \ne W]\leq \epsilon.
	\end{align}
	\end{enumerate}
\end{definition}

As observed in \cite{PolyIT11}, the setup of VLFT is equivalent to augmenting each channel with a special use-once input symbol, the termination symbol, that has infinite reliability. This assumption captures the fact that many practical systems communicate control signals in the upper protocol layers and the termination symbol effectively separates the control issue from the physical channel. The benefit of the infinitely reliable control signal can cause the VLFT achievable rate to be larger than that of the original feedback channel capacity because what would have been a decoding error without feedback becomes a codeword ``erasure'' under VLFT.  

The class of fixed-to-variable codes \cite{Verdu_10}, or FV codes, is a special class of VLFT codes that satisfies the following conditions:
\begin{align}
&f_n(U,W,Y^{n-1}) = f_n(U,W)
\\
&\tau = \inf\{ n \geq 1: g_n(U,Y^n) = W\}.
\end{align}
Such codes are zero-error VLFT codes and only use feedback to stop the transmission. Fountain codes and families of rate-compatible codes used with an IR scheme are examples of such codes. This class of codes is widely used in practical systems and will be the main focus of this paper. 

Let the finite dimensional distribution of $(X^n, \bar{X}^n, Y^n)$ be: 
\begin{align}
&P_{X^nY^n\bar{X}^n}(x^n,y^n,\bar{x}^n) 
\\
&= P_{X^n}(x^n)P_{X^n}(\bar{x}^n)\prod_{j = 1}^nP_{Y_j|X^jY^{j-1}}(y_j|x^j, y^{j-1})\, ,
\end{align}
i.e. the distribution of $\bar{X}^n$ is identical to $X^n$ but independent of $Y^n$. The information density $i(x^n;y^n)$ is defined as
\begin{align}
i(x^n; y^n) &= \log \frac{dP_{X^nY^n}(x^n,y^n)}{d(P_{X^n}(x^n)\times P_{Y^n}(y^n))} 
\\
&= \log \frac{dP_{Y^n|X^n}(y^n|x^n)}{dP_{Y^n}(y^n)}.
\end{align}
The following is the achievability result in \cite{PolyIT11}:
\begin{theorem}[\cite{PolyIT11}, Thm. 10]
\label{thm:RC_Achieve}
Fixing $M > 0$, there exists an $(\ell, M, 0)$ VLFT code with
\begin{align}
\label{eqn:AchevVLFT}
\ell &\leq \sum_{n = 0}^{\infty} \xi_n
\end{align}
where $\xi_n$ is the following expectation: 
\begin{equation} \label{eqn:Xi_n}
\E\left[\min\left\{1, (M-1)\Pr[i(X^n;Y^n)\leq i(\bar{X}^n;Y^n)|X^nY^n]\right\}\right].
\end{equation}
The expression above is referred to as random coding union (RCU) bound.
We take $i(X^0;Y^0) = 0$ and hence $\xi_0 = 1$.  Additionally, from the proof of \cite[Thm. 11]{PolyIT11}, we have: 
\begin {equation}
\label{eqn:VLFT_DT}
\xi_n \leq \E\left[ \exp\left\{-[i(X^n; Y^n) - \log \gamma]^+\right \}\right]. 
\end{equation}
\end{theorem}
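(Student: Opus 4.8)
The plan is to build the code by random coding, with the shared codebook serving as the common randomness $U$, and to let the noiseless feedback supply a zero-error stopping mechanism. First I would draw $M$ codewords $C_1,\dots,C_M$, each an infinite i.i.d.\ sequence generated from the single-letter input law so that every length-$n$ prefix $C_j^n$ has distribution $P_{X^n}$, and reveal the whole codebook to both terminals as $U$. At each time $n$ the receiver forms the maximum-metric estimate $g_n(U,Y^n) = \arg\max_{j} i(C_j^n;Y^n)$, breaking ties by index. Because the transmitter knows $W$ and, through the feedback, the entire output prefix $Y^{n-1}$, it can evaluate $g_n$ itself; I would then take the FV stopping rule $\tau = \inf\{n\ge 1: g_n(U,Y^n)=W\}$, transmitting the infinitely reliable termination symbol exactly at time $\tau$. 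By construction the final decision $g_\tau(U,Y^\tau)$ equals $W$, so the code is zero-error; it remains only to control $\E[\tau]$.

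Second, I would write the expected latency as a sum of tail probabilities, $\E[\tau]=\sum_{n=0}^\infty \Pr[\tau>n]$, and bound each tail. Since $\{\tau>n\}$ forces $g_m\ne W$ for all $m\le n$, in particular $\{\tau>n\}\subseteq\{g_n(U,Y^n)\ne W\}$. Conditioning on $(X^n,Y^n)=(C_W^n,Y^n)$, the conditional probability of $\{g_n\ne W\}$ is at most $\min\{1,\ (M-1)\Pr[i(X^n;Y^n)\le i(\bar X^n;Y^n)\mid X^n,Y^n]\}$: at most $1$ trivially, and at most the union bound over the $M-1$ competitors, each distributed as $P_{X^n}$ and independent of $Y^n$. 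Taking the outer expectation over $(X^n,Y^n)$ therefore gives $\Pr[\tau>n]\le\xi_n$, with the $n=0$ term degenerating to $\xi_0=1$ as stated. Summing over $n$ yields $\E[\tau]\le\sum_{n=0}^\infty\xi_n$, and because the codebook is itself the common randomness, the random ensemble already is a valid $(\ell,M,0)$ VLFT code with $\ell=\sum_n\xi_n$ (a standard extraction of a good deterministic codebook is also available if one prefers $U$ trivial).

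Third, for the auxiliary bound (\ref{eqn:VLFT_DT}) I would control the inner conditional probability by a change of measure. Writing $a=i(x^n;y^n)$ for the realized information density of the true codeword, the competitor's tail satisfies $\Pr[i(\bar X^n;y^n)\ge a]\le e^{-a}\,\E_{\bar X^n}[e^{i(\bar X^n;y^n)}]$ via the elementary inequality $\mathbf{1}\{i\ge a\}\le e^{i-a}$, and the remaining expectation equals $1$ because $e^{i(\bar X^n;y^n)}$ is exactly the likelihood ratio $dP_{Y^n|X^n}(y^n\mid\cdot)/dP_{Y^n}(y^n)$ integrated against $P_{X^n}$. Hence the inner probability is at most $e^{-i(x^n;y^n)}$, so $\min\{1,(M-1)\Pr[\cdots]\}\le\min\{1,\gamma e^{-i(x^n;y^n)}\}=\exp\{-[i(x^n;y^n)-\log\gamma]^+\}$ with $\gamma=M-1$; the inequality is monotone in $\gamma$, so it persists for every $\gamma\ge M-1$. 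Taking expectations gives (\ref{eqn:VLFT_DT}).

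The main obstacle is conceptual rather than computational: verifying that the FV stopping rule is genuinely zero-error and that $\tau$ is a legitimate stopping time for the filtration $\mc{F}_n=\sigma\{U,Y^n,W\}$. The key point is that the termination symbol \emph{separates} the ``when to stop'' decision (which the encoder can make because it has $W$, $U$, and the causal feedback $Y^{n-1}$) from the ``what to decode'' decision (which the decoder makes from $U$ and $Y^n$); since $\tau$ is defined as the first agreement of the two, correctness is automatic and all the effort is displaced into the tail estimate. A final sanity check is that $\sum_n\xi_n<\infty$, which follows once one notes that $i(X^n;Y^n)$ concentrates near $n$ times the mutual information, so that $\xi_n\approx\min\{1,(M-1)e^{-i(X^n;Y^n)}\}$ decays geometrically past $n\approx(\log M)/I(X;Y)$, making $\E[\tau]$ finite.
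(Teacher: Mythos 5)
Your construction and bounds coincide with the standard random-coding argument behind this theorem (and with the paper's own appendix proof of its finite-$N$ analogue, Theorem~\ref{thm:FiniteVLFT}): i.i.d.\ codebook as the common randomness $U$, maximum-information-density decoding, the FV stopping rule $\tau=\inf\{n:g_n(U,Y^n)=W\}$ giving zero error, the tail-sum identity $\E[\tau]=\sum_{n\ge 0}\P[\tau>n]$ with $\{\tau>n\}$ absorbed into the marginal error event and bounded by the RCU expression, and the change-of-measure step $\E[e^{i(\bar X^n;y^n)}]=1$ yielding \eqref{eqn:VLFT_DT}. The proposal is correct and takes essentially the same route as the paper and its cited source.
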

\subsection{Problem Statement and Main Results}
\label{sec:MainConstribution}
Following the VLFT framework of \cite{PolyIT11}, this paper studies the following problems:
\begin{enumerate}
\item[(i)] Finite-length codeword penalty for VLFT (FV) codes: The random coding approach in \cite{PolyIT11} generates random codebooks in an infinite product space (i.e. with codewords of infinite length). We study the performance penalty incurred by using random codebooks with a finite block-length.
\item[(ii)] The penalty associated with limitations on decoding times:  We study the performance penalty incurred when decoding is only allowed after every $I$ symbols are received, i.e. periodic decoding times. The case where the decoding times can be an arbitrary set of increments $\{I_j\}_{j = 1}^{m}$ is studied in \cite{ChenDraft2012}.
\end{enumerate}

For the rest of the paper we only consider channels with essentially bounded information density $i(X;Y)$. Define the fundamental transmission limit of a VLFT code with finite block-length $N$ and uniform increment $I$ as follows:
\begin{definition}
Let $M^*_t(\ell, N, I, \epsilon)$ be the maximum integer $M$ such that there exist an $(\ell, M, N, I, \epsilon)$ VLFT code based on a code with block-length $N$ and a decoder that only attempts decoding every $I$ symbols. For zero-error codes where $\epsilon = 0$ we denote the maximum $M$ as $M^*_t(\ell, N, I)$ and for zero-error codes with $I=1$ (i.e. decoding attempts after every received symbol) we denote the maximum $M$ as $M^*_t(\ell, N)$.
\end{definition}

All of the results that follow assume an arbitrary but fixed channel $\{P_{Y_j|X_j}\}_{j = 1}^N$ and a process $\{X_j\}_{j = 1}^{N}$ taking values in $\mc{X}$ where $N$ could be set as infinity. Our main asymptotic result is the following expansion for a stationary DMC:
\begin{theorem}
\label{thm:MainAsympResult2}
Choosing $N = \ell + \Omega(\log\ell)$ for a stationary DMC with capacity $C$, we have:
\begin{align}
\label{eq:Thm2}
\log M^*_t(\ell, N, I) \geq \ell C - O(I)\, .
\end{align}
\end{theorem}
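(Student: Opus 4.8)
The plan is to build the code directly from the random-coding union bound of Theorem~\ref{thm:RC_Achieve}, then charge the two restrictions separately, showing that periodic decoding costs $O(I)$ while a finite block-length costs only $O(1)$, which is absorbed into the $O(I)$ term.

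First I would fix the baseline of infinite-length codewords decoded at every symbol. Reading the FV stopping rule $\tau=\inf\{n\ge1:g_n(U,Y^n)=W\}$ through $\P[\tau>n]\le\xi_n$, note that $i(X^n;Y^n)=\sum_{j=1}^n i(X_j;Y_j)$ is a random walk with drift $C$ and, because $i(X;Y)$ is essentially bounded, has bounded increments. The quantity $\sum_n\xi_n$ bounding $\E[\tau]$ behaves like the first-passage time of this walk above $\log M$, so a renewal/Wald argument with a bounded-overshoot estimate gives $\E[\tau]\le\frac{\log M}{C}+O(1)$, i.e. $\log M\ge\ell_\infty C-O(1)$. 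This is the $N=\infty$, $I=1$ performance against which the two penalties are measured.

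Next I would treat periodic decoding. If the decoder may act only at multiples of $I$, the transmitter can emit the termination symbol only at such instants, so the realized stopping time is the every-symbol stopping time rounded up to the next multiple of $I$; hence $\E[\tau_I]\le\E[\tau_1]+I$. Designing the underlying code for the tighter budget $\E[\tau_1]\le\ell-I$ then produces a code with $\E[\tau_I]\le\ell$ and, by the baseline, $\log M\ge(\ell-I)C-O(1)=\ell C-O(I)$. This is the step that yields the advertised linear-in-$I$ penalty, and it is the easy direction.

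The finite-length restriction is the crux. Truncating the random codebook at length $N$ means that on $\{\tau>N\}$ the codeword is exhausted before the decoder is correct; to keep the error probability exactly zero the construction must fall back on a fresh length-$N$ block, with the decoder combining information densities across blocks, rather than ever declaring a premature estimate. The task is to show that the extra expected latency from these block-boundary crossings is $O(1)$. I would bound it with the essentially-bounded assumption: a Chernoff estimate applied to \eqref{eqn:VLFT_DT} controls $\P[\,\tau>N\,]$, while the observation that $i(X^n;Y^n)$ is already near the threshold $\log M$ whenever a crossing occurs keeps the expected cost of each recovery $O(1)$; taking the slack $N-\ell=\Omega(\log\ell)$ then drives the net finite-length penalty to $O(1)$. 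I expect this tail-versus-recovery bookkeeping—pinning down exactly how much block-length slack is needed for the recovery cost to remain constant while preserving zero error—to be the main obstacle, together with verifying that the finite-length and periodic penalties add rather than compound when both constraints are imposed simultaneously.
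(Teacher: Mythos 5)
Your overall skeleton (Wald/renewal baseline, Chernoff control of the tail beyond $N$, logarithmic slack in $N$) matches the paper's strategy, but the step you call ``the easy direction'' is the one that fails as stated. You claim that with $I$-periodic decoding the realized stopping time is the every-symbol stopping time rounded up to the next decoding instant, so $\E[\tau_I]\leq\E[\tau_1]+I$. This is not true for the maximum-information-density decoder: $g_n(U,Y^n)=W$ at some time $n$ does not imply $g_{n'}(U,Y^{n'})=W$ at the next decoding instant $n'>n$, because additional noisy symbols can flip the $\arg\max$ (equivalently, the information-density walk can dip back below the threshold after first crossing it). The paper avoids this by never comparing $\tau_I$ to $\tau_1$: it defines the auxiliary first-passage time directly on the subsequence, $\tau_0=\inf\{j:S_{n_j}\geq\log M\}$, applies Doob's optional stopping theorem to $\{S_n-nC\}$ at $n_{\tau_0}$, and bounds the overshoot by $I$ times the jump bound — that is where one factor of $O(I)$ comes from; the other comes from the residual sum $I\sum_k\E[\exp\{-[S_{n_k}]^+\}]\leq Ia_3$. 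Your rounding argument would need to be replaced by something of this form.

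Two further gaps. First, you do not explain why $\Omega(\log\ell)$ of slack in $N$ suffices; the paper needs the converse of \cite{PolyIT11} (Theorem~\ref{thm:poly11}), $\log M_t^*\leq\ell C+\log(\ell+1)+\log e$, to conclude that $N=(1+\delta)\ell$ with $\delta\ell C>\log(\ell+1)+\log e$ forces $\log M/N\leq C-\delta'<C$, which is precisely what makes $\P[\zeta_N]$ and $\P[\tau'\geq N]$ decay exponentially in $\ell$ so that the finite-$N$ corrections are $O(1)$. Without invoking the converse, the claim that logarithmic slack is enough is unsupported. Second, you explicitly defer ``verifying that the finite-length and periodic penalties add rather than compound,'' but that verification is the substance of the theorem: the paper's Theorem~\ref{thm:MainAsympResult1} carries out the combined bookkeeping (with $m$ decoding attempts per block, $N=n_1+(m-1)I$, and the factor $(1-\P[\zeta_N])^{-1}$ from the ARQ repetition) to obtain $\ell\leq\log M/C+N\P[\tau_0\geq m]+O(I)$ before the scaling of $N$ is chosen. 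Also note a smaller mismatch: the paper's repeated code discards previous blocks rather than combining information densities across blocks; combining is sound for achievability but changes the renewal identity $\E[\tau']\leq(1-\P[\zeta_N])^{-1}\sum_{n=0}^{N-1}\xi_n$ on which the analysis rests.
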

Specifically, if we choose $N > \ell + \frac{\log(\ell+1) + \log e}{C}$ and have decoding attempts separated by an increment $I$, then the expansion is the same as the case with $N = \infty$ and the constant term depends on the choice of the increment $I$. The proof is provided in Section~\ref{sec:VLFT_TimeLimit2}.

Of course, for practical applications that apply feedback to obtain reduced latency, the non-asymptotic behavior is critical.  Numerical results on a binary symmetric channel demonstrate that properly selected values of $N$ and $I$ can yield excellent expected throughput with expected latency on the order of $200$ symbols.

The rest of the paper is organized as follows:  Section~\ref{sec:VLFT_FiniteLength} investigates the penalty incurred by using VLFT codes based on finite block-lengths. Section~\ref{sec:VLFT_TimeLimit1} studies the penalty incurred by limiting decoding attempts, and Section~\ref{sec:VLFT_TimeLimit2} studies the penalty when both limitations are applied. Section~\ref{sec:NumResults} gives numerical results for a binary symmetric channel. Section~\ref{sec:Conclusion} concludes the paper.


\section{Finite Block-Lengths and Limited Decoding}
In \cite{PolyIT11}, attention was focused on $(\ell, M, N, I, \epsilon)$ VLFT codes with $N=\infty$ and $I=1$. This section studies the penalties associated with using finite $N$ and $I\geq 1$.   We focus on the $\epsilon = 0$ case.  The random coding framework of \cite{PolyIT11} is retained.  We focus on achievability results under these constrained scenarios using proofs based on random FV codes.  The general converse established in \cite{PolyIT11} still applies since these additional constraints can only further limit performance.

\subsection{The Finite-Block-Length Limitation}
\label{sec:VLFT_FiniteLength} 
This subsection investigates $(\ell, M, N, I, \epsilon)$ VLFT codes with finite $N$ but retains decoding at every symbol ($I=1$). 
FV codes (as described in Section \ref{sec:PreWork}) are employed so that encoding does not depend on the feedback except that feedback indicates when it is the time to terminate transmission. 

Letting $\zeta_j$ be the marginal error event at the $j$th transmission, the expected latency $\E[\tau]$ is given as:
\begin{align}
\label{eqn:SumsOfPtau}
\E[\tau]  &= \sum_{n = 1}^{\infty} n \P[\tau = n]
\\
&= \sum_{n \geq 1} \P[\tau > n]
\\
\label{eqn:SumsOfJoints}
&=\sum_{n \geq 1} \P\left[\bigcap_{j = 1}^{n}\zeta_j\right] \, .
\end{align}

Consider a code $\mc{C}_N$ with finite block-length $N$ where each element is a length-$N$ $\mc{X}$-valued string.  Achievability results for an $(\ell, M, N, 1, \epsilon)$ ``truncated'' VLFT code follow from a random coding argument. In particular we have the following: 
\begin{theorem}
\label{thm:FiniteVLFT}
For any $M > 0$ there exists an $(\ell, M, N, 1, \epsilon)$ truncated VLFT code with
\begin{align}
\ell &\leq \sum_{n = 0}^{N-1} \xi_n
\\
\epsilon &\leq \xi_N.
\end{align}
where $\xi_n$ is the same as \eqref{eqn:Xi_n}. 
The proof is in the appendix. \end{theorem}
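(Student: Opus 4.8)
The plan is to reuse the random FV coding argument behind Theorem~\ref{thm:RC_Achieve}, but to draw codewords of length exactly $N$ and to force termination at time $N$. The common randomness $U$ indexes a random codebook $\mc{C}_N$ of $M$ length-$N$ strings generated from the process $\{X_j\}_{j=1}^N$, shared by transmitter and receiver. At each time $n \leq N$ the decoder applies the maximum-information-density rule to produce $g_n(U,Y^n)$; since this is an FV code, the transmitter (knowing both $U$ and $W$) terminates at $\tau = \min\{\inf\{n\geq 1 : g_n(U,Y^n)=W\},\, N\}$ and the receiver outputs $\hat W = g_\tau(U,Y^\tau)$. Because $U$ is genuine common randomness that is part of the code, every performance figure is evaluated in expectation over the ensemble, so both claimed bounds hold for the single (randomized) code and no expurgation step is required to obtain them simultaneously.

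The one ingredient I would import from \cite{PolyIT11} is the per-time-step random-coding-union estimate $\P[\zeta_n]\leq \xi_n$, where $\zeta_n=\{g_n(U,Y^n)\neq W\}$ is the marginal error event of the decoder at block-length $n$: fixing the transmitted codeword and output by symmetry and taking a truncated union bound over the $M-1$ competitors reproduces exactly $\xi_n$ of \eqref{eqn:Xi_n}. With this in hand both bounds follow from the decomposition in \eqref{eqn:SumsOfJoints}. For the latency, $\tau\leq N$ gives $\E[\tau]=\sum_{n=0}^{N-1}\P[\tau>n]$, and for $n<N$ the event $\{\tau>n\}$ is exactly $\bigcap_{j=1}^n\zeta_j$, so
\begin{align}
\E[\tau] = \sum_{n=0}^{N-1}\P\!\left[\bigcap_{j=1}^{n}\zeta_j\right] \leq \sum_{n=0}^{N-1}\P[\zeta_n] \leq \sum_{n=0}^{N-1}\xi_n ,
\end{align}
where $\P[\tau>0]=1=\xi_0$. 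For the error, a mistake occurs only if the decoder is never correct through time $N$, i.e. $\{\hat W\neq W\}=\bigcap_{j=1}^N\zeta_j$, whence $\epsilon=\P[\bigcap_{j=1}^N\zeta_j]\leq\P[\zeta_N]\leq\xi_N$.

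The step I expect to require the most care is the truncation bookkeeping, namely verifying that forcing termination at $N$ leaves a residual error controlled by the single term $\xi_N$ rather than by the entire discarded tail $\sum_{n\geq N}\xi_n$ of Theorem~\ref{thm:RC_Achieve}. The point is that the error event collapses to ``confusion persists at time $N$,'' a single RCU event dominated by its last factor $\zeta_N$; this is precisely what splits the infinite latency sum into a finite latency contribution (the first $N$ terms) and a one-term error penalty. I would also double-check the boundary term $n=0$ and the identification $\{\tau>n\}=\bigcap_{j=1}^n\zeta_j$ for $n<N$ to avoid an off-by-one in the summation limits, and confirm that the maximum-information-density decoder is consistent with the RCU bound so that $\P[\zeta_n]\leq\xi_n$ may be applied uniformly for $0\leq n\leq N$.
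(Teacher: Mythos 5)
Your proposal is correct and follows essentially the same route as the paper's appendix proof: the same length-$N$ random codebook shared through $U$, the same truncated stopping time $\tau=\inf\{n:g_n(U,Y^n)=W\}\wedge N$, the same latency decomposition $\E[\tau]=\sum_{n=0}^{N-1}\P[\tau>n]\leq\sum_{n=0}^{N-1}\P[\zeta_n]\leq\sum_{n=0}^{N-1}\xi_n$, and the same observation that the error event $\bigcap_{j=1}^{N}\zeta_j$ is contained in $\zeta_N$ so that $\epsilon\leq\xi_N$. No substantive differences.
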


Achievability results for $\epsilon=0$ can be obtained using an $(\ell, M, N, 1, 0)$ ``repeated'' VLFT code, which modifies the encoder and decoder pairs with an ARQ-type repetition.  When the block-length-$N$ codeword is exhausted without successful decoding,  the transmission process starts from scratch discarding the previous received symbols.  Using the original $N$ symbols through, for example, Chase code combining would be beneficial, but this is not necessary for our achievability result.  Specifically, we have the following result for a zero-error repeated VLFT code with a finite block-length $N$:
\begin{theorem}
\label{thm:FiniteFV}
For every $M > 0$ there exists an $(\ell, M, N, 1, 0)$ repeated VLFT code such that
\begin{align}
\label{eqn:FiniteFV}
\ell &\leq \frac{1}{(1-\xi_N)} \sum_{n = 0}^{N-1}\xi_n
\end{align}
where $\xi_n$ is the same as \eqref{eqn:Xi_n}.  The proof is in the appendix.
\end{theorem}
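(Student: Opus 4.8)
The plan is to exploit the i.i.d.\ renewal structure created by the restart mechanism and reduce the whole analysis to the single-round (truncated) bound of Theorem~\ref{thm:FiniteVLFT}. First I would isolate one ``round'' of the repeated code: within a round we transmit at most $N$ symbols drawn from a freshly generated length-$N$ codeword and attempt decoding after each received symbol. Using the common randomness $U$ to generate an independent codebook for every round makes the rounds i.i.d. Let $T$ denote the number of symbols consumed in a single round, so that $T=\min(\sigma,N)$ where $\sigma=\inf\{n\ge1:g_n(U,Y^n)=W\}$ is the first correct-decoding time for that round's codeword, and let $p=\P[\sigma>N]$ be the probability that a round fails. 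Because the FV stopping rule only terminates when $g_n(U,Y^n)=W$, every terminating decision is correct, so the repeated code is automatically zero-error; the restart merely inflates the latency.

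Next I would write the renewal equation for the total expected latency $\ell$. Conditioning on the outcome of the first round and using that the restarted process is an independent copy of the original gives
\begin{align}
\ell = \E[T] + p\,\ell,
\end{align}
whence $\ell=\E[T]/(1-p)$ provided $\xi_N<1$. It then remains to bound the two per-round quantities. For the failure probability, a round fails exactly when the truncated code of Theorem~\ref{thm:FiniteVLFT} would have erred, so $p=\P[\sigma>N]\le\xi_N$ is the same event controlled by that theorem's $\epsilon\le\xi_N$. For the per-round latency, since $1\le T\le N$,
\begin{align}
\E[T] = \sum_{n=0}^{N-1}\P[T>n] = \sum_{n=0}^{N-1}\P[\sigma>n] = \sum_{n=0}^{N-1}\P\left[\bigcap_{j=1}^{n}\zeta_j\right] \le \sum_{n=0}^{N-1}\xi_n,
\end{align}
where the middle equality uses $T=\min(\sigma,N)$ with $n\le N-1$ and the last inequality is the per-symbol random-coding bound $\P[\tau>n]\le\xi_n$ already underlying Theorems~\ref{thm:RC_Achieve} and \ref{thm:FiniteVLFT}. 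Combining these, and noting $1-p\ge1-\xi_N>0$, yields
\begin{align}
\ell = \frac{\E[T]}{1-p} \le \frac{1}{1-\xi_N}\sum_{n=0}^{N-1}\xi_n,
\end{align}
which is precisely the claimed bound.

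The step I expect to be the main obstacle is making the renewal decomposition fully rigorous rather than heuristic. Concretely, I must verify that the codebooks and channel realizations across rounds are genuinely independent, so the restarted process is a true i.i.d.\ copy and $\ell=\E[T]+p\,\ell$ holds with equality; that the per-round failure event $\{\sigma>N\}$ coincides with the error event of the truncated code so that $p\le\xi_N$ may be borrowed directly; and that $T$ charges the full $N$ symbols on every failed round while charging only $\sigma$ on the terminating round, which is exactly what $T=\min(\sigma,N)$ encodes and what legitimizes the telescoping identity $\E[T]=\sum_{n=0}^{N-1}\P[\sigma>n]$. A secondary care point is the final ensemble averaging: since $\ell$, $\E[T]$, and $p$ are all expectations over the randomized construction, the existence of a single deterministic family of codebooks attaining the average bound follows by a standard argument.
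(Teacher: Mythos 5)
Your proof is correct and follows essentially the same route as the paper: both decompose the latency into i.i.d.\ rounds of at most $N$ symbols, bound the per-round latency by $\sum_{n=0}^{N-1}\P[\zeta_n]\le\sum_{n=0}^{N-1}\xi_n$ and the per-round failure probability by $\xi_N$, and solve the resulting renewal identity $\ell=\E[T]+p\,\ell$ (which is exactly the paper's recursion $\E[\tau']=\sum_{n}\P[\cap_{j\le n}\zeta_j]+\P[\cap_{j\le N}\zeta_j]\E[\tau']$). The only difference is cosmetic: you draw a fresh codebook each round, whereas the paper retransmits the same codeword over fresh channel uses; your variant if anything makes the independence of rounds slightly cleaner to justify.
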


Note that this is an FV code based on a finite-length codebook rather than an infinite one. The penalty of using a codebook with finite length is made clear in the following theorem and its corollary:
\begin{theorem}
\label{thm:VLFTExpandFinite}
For an $(\ell, M, N, 1, 0)$ repeated VLFT code with $N = \Omega(\log M)$, we have the following expansion for a stationary DMC  with capacity $C$:
\begin{align}
\label{eqn:ellExpansion1}
\ell \leq \frac{\log M}{C} + O(1)\, .
\end{align}
Let  $C_\Delta = C- \Delta, \Delta > 0$ and $N = \log M / C_\Delta$. The correction term is upper bounded as follows:
\begin{align}
\label{eqn:ellExpansion2}
O(1) \leq \frac{b_2 \log M}{C(M^{b_3/C_\Delta} - b_2)} 
+ \frac{b_0\log M}{C_\Delta M^{b_1\Delta/C_\Delta}} + a
\end{align}
where $a$ depends on the mean and uniform bound of $i(X;Y)$, and $b_j$'s are constants related to $\Delta$ and $M$. 
\end{theorem}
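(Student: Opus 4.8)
The plan is to start from the repeated-VLFT estimate of Theorem~\ref{thm:FiniteFV}, namely $\ell\le\frac{1}{1-\xi_N}\sum_{n=0}^{N-1}\xi_n$, and to control the truncated sum and the prefactor $\frac{1}{1-\xi_N}$ separately. Throughout I would take the inputs to be drawn i.i.d.\ from a capacity-achieving distribution, so that $S_n:=i(X^n;Y^n)=\sum_{j=1}^n i(X_j;Y_j)$ is a random walk with i.i.d.\ essentially bounded increments of mean $C$. The first step is to combine the RCU bound \eqref{eqn:Xi_n} with the elementary change-of-measure inequality $\Pr[i(\bar X^n;y^n)\ge t\mid Y^n=y^n]\le e^{-t}$ to obtain $\xi_n\le\Pr[S_n\le\log(M-1)]+(M-1)\,\E[e^{-S_n}\mathbf{1}\{S_n>\log(M-1)\}]$, which is exactly the split induced by the DT bound \eqref{eqn:VLFT_DT} with $\gamma=M-1$.

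Summing the first piece gives $\sum_n\Pr[S_n\le\log(M-1)]=\E[\#\{n:S_n\le\log(M-1)\}]$, the expected occupation time below the threshold. A renewal/Wald argument, that is, optional stopping of the martingale $S_n-nC$ at the first passage above $\log(M-1)$ together with a bounded expected number of later returns, bounds this by $\frac{\log M}{C}+a$, where the additive constant $a$ absorbs the mean overshoot and the return count and is finite precisely because $i(X;Y)$ is essentially bounded. The second piece $T:=(M-1)\sum_n\E[e^{-S_n}\mathbf{1}\{S_n>\log(M-1)\}]$ is where I expect the main difficulty, because $\E[e^{-S_n}]=1$ makes a term-by-term or first-passage factorization diverge, so one cannot simply pull $e^{-S_\nu}$ out and sum a geometric tail. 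The remedy is to use that on $\{S_n>\log(M-1)\}$ one has $(M-1)e^{-S_n}\le1$ and to estimate $T$ by an occupation/Green's-function argument: the positive-drift walk visits each band above $\log(M-1)$ only a bounded expected number of times, and weighting those visits by $e^{-S_n}$ yields a convergent geometric sum, so $T=O(1)$ and is likewise absorbed into $a$. This establishes $\sum_{n=0}^{N-1}\xi_n\le\frac{\log M}{C}+a$ and hence the leading behaviour in \eqref{eqn:ellExpansion1}.

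It then remains to show the prefactor is $1+o(1)$, and this is where the choice $N=\log M/C_\Delta$ with $C_\Delta=C-\Delta$ enters: the mean satisfies $\E[S_N]=NC=\log M+\Delta N=\log M+\frac{\Delta\log M}{C_\Delta}$, so the threshold $\log M$ lies a distance growing linearly in $N$ below the mean of $S_N$. Applying the same two-part split to $\xi_N$, now as a large-deviation estimate, the lower-tail piece is bounded by a Chernoff inequality, $\Pr[S_N\le\log(M-1)]\le b_2\,M^{-b_3/C_\Delta}$ with $b_3$ the large-deviation rate at $\Delta$ below capacity, while the over-threshold piece is controlled by the tilt $\mathbf{1}\{S_N>\log(M-1)\}\le e^{\lambda(S_N-\log(M-1))}$, giving $(M-1)^{1-\lambda}\phi(1-\lambda)^N\le b_0'\,M^{-b_1\Delta/C_\Delta}$ for a fixed $\lambda\in(0,1)$, where $\phi(s)=\E[e^{-s\,i(X;Y)}]$ and the exponent is negative because $s+\frac{\log\phi(s)}{C_\Delta}\to-\frac{s\Delta}{C_\Delta}$ as $s\to0$.

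Finally I would expand $\frac{1}{1-\xi_N}=1+\frac{\xi_N}{1-\xi_N}$ and multiply by $\sum_{n=0}^{N-1}\xi_n\le\frac{\log M}{C}+a$. The excess $\frac{\xi_N}{1-\xi_N}\big(\frac{\log M}{C}+a\big)$ splits along the two pieces of $\xi_N$: keeping the lower-tail piece inside the $\frac{1}{1-\cdot}$ factor produces $\frac{\log M}{C}\cdot\frac{b_2 M^{-b_3/C_\Delta}}{1-b_2 M^{-b_3/C_\Delta}}=\frac{b_2\log M}{C(M^{b_3/C_\Delta}-b_2)}$, while the over-threshold piece produces $\frac{b_0\log M}{C_\Delta M^{b_1\Delta/C_\Delta}}$, and the renewal/occupation constant leaves the residual $a$. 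Summing the three contributions yields \eqref{eqn:ellExpansion2}. The steps requiring the most care are the renewal overshoot bound and the convergence of $T$, both of which depend essentially on the boundedness of $i(X;Y)$ assumed for the rest of the paper.
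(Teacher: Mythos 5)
Your proposal is correct and follows essentially the same route as the paper: weaken the RCU bound to the DT form, use Wald/optional stopping of $S_n - nC$ at the first passage above $\log M$ to extract the $\log M/C$ term with a bounded-overshoot constant, show the post-threshold contributions sum to $O(1)$ by renewal-type exponential bounds, and control the prefactor $(1-\xi_N)^{-1}$ via Chernoff bounds at $N=\log M/C_\Delta$ followed by a geometric-series expansion. The only difference is organizational — you split each $\xi_n$ term-by-term and invoke occupation-time/Green's-function bounds, whereas the paper decomposes the whole sum at the stopping time $\tau'$ and uses the strong Markov property to reduce the tail to $\sum_k\E[e^{-[S_k]^+}]$ — but these are equivalent renewal arguments resting on the same positive-drift, bounded-increment facts.
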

This choice of $N$ has residual terms decaying with $M$ very slowly. However, our numerical results indicate that this decay is fast enough for excellent performance in the short-block-length regime. 

We define a pair of random walks to simplify the proofs:
\begin{align}
\label{eqn:RW1}
S_n &\triangleq i(X^n;Y^n)
\\
\label{eqn:RW2}
\bar{S}_n &\triangleq i(\bar{X}^n;Y^n)\,.
\end{align}
For any measurable function $f$ we have the property:
\begin{align}
\label{eqn:tilting}
\E[f(\bar{X}^n;Y^n)] = \E[f(X^n;Y^n)\exp\{-S_n\}].
\end{align}
Observe that $S_n$ and $\bar{S}^n$ are sums of i.i.d. r.v.s with positive and negative means $\E[i(X;Y)] = C$ and $\E[i(\bar{X};Y)] = -L$ ($L$ is the lautum information \cite{Palomar08}) respectively. In particular $\{S_n - nC\}_n$ is a bounded martingale and hence by Doob's optional stopping theorem we have for a stopping time $\tau$:
\begin{align}
\label{eqn:DoobOpt}
\E[S_\tau] = C\E[\tau]\,.
\end{align}
These properties are utilized in the following proofs. 
\begin{proof}
Following the definition of \eqref{eqn:RW1} and \eqref{eqn:RW2}, we first weaken the RCU bound by \eqref{eqn:VLFT_DT} and choosing $\gamma = M$:
\begin{align}
&\E\left[\min\left\{(1,(M-1)\P[\bar{S}_n\geq S_n|X^nY^n]\right\}\right] 
\\
&\leq \E\left[\exp\{-[S_n - \log M]^+\}\right]\,.
\end{align} 
Then from Thm.\ref{thm:FiniteFV} we have:
\begin{align}
\ell \leq \frac{1}{(1-\P[\zeta_N])}\sum_{n = 0}^{N-1}\E\left[\exp\left\{-[S_n - \log M]^+\right\}\right]\, .
\end{align}
Consider an auxiliary stopping time $\tau'$ w.r.t. the filtration $\mc{F}_n = \sigma\{X^n, \bar{X}^n, Y^n\}$:
\begin{align}
\tau' = \inf\{n \geq 0: S_n \geq \log M\}\wedge N\,.
\end{align}
Denoting $\E[X; E] = \E[X 1_E]$ where $1_E$ is the indicator function of the set $E$, we have:
\begin{align}
	&\sum_{n = 0}^{N-1}\E\left[\exp\{-[S_n - \log M]^+\}\right] 
\\
\nonumber
	&= \E\left[\tau' + \sum_{k = 0}^{N-1-\tau'}\exp\left\{-[S_{k+\tau'} - \log M ]^+\right\};\tau' < N\right]
\\
	&\quad + N\P[\tau' \geq N]\,.
\end{align}
On $\{\tau' < N\}$ we have $i(X^{\tau'};Y^{\tau'}) \geq \log M$ and hence:
\begin{align}
&\left[S_{k+\tau'} - \log M\right]^+ 
= \left[S_{k+\tau'} - S_{\tau'} + S_{\tau'} - \log M\right]^+
\\
&\geq  \left[i(X^{k+\tau'};Y^{k+\tau'}) - i(X^{\tau'};Y^{\tau'}) \right]^+
\\
&= \left[i(X^{k};Y^{k})\right]^+ 
\end{align}
where the last equality is true almost surely by the strong Markov property of random walks. It then follows that:
\begin{align}
\nonumber
&\sum_{n = 0}^{N-1}\E\left[\exp\{-[S_n - \log M]^+\}\right] 
\\
\nonumber
&\leq 
\E\left[\tau' + \sum_{k = 0}^{N-1-\tau'}\exp\{-\left[S_k\right]^+ \}; \tau' < N\right] 
\\
&\quad+ N\P[ \tau' \geq N ]\,.
\end{align}
Observe that $S_n$ and $\bar{S}^n$ are sums of i.i.d. r.v.s with positive and negative means respectively. Thus by Chernoff inequality we have that:
\begin{align}
\E\left[e^{-[i(X^k;Y^k)]^+} \right]
&= \P\left[S_k > 0\right] + \P\left[\bar{S}_k\leq 0\right]
\\
&\leq a_1e^{-ka_2} \, ,
\end{align}
wher first equality follows from \eqref{eqn:tilting}.
Thus there is a constant $a_3 > 0$ such that:
\begin{align}
\sum_{k = 1}^{N-1-\tau'}\E\left[\exp\{-[S_k]^+\} \right]
&\leq \sum_{k = 1}^{N-1}\E\left[\exp\{-[S_k]^+\} \right] 
\\\label{eqn:a3}
&\leq a_3\, .
\end{align}
We assume that $i(X^n; Y^n)$ has bounded jumps, and hence on the set $\{\tau' < N\}$ there is a constant $a_4$ such that
\begin{align}
i(X^{\tau'};Y^{\tau'}) - \log M \leq a_4 C \,. 
\end{align}
Therefore from \eqref{eqn:DoobOpt} we have:
\begin{align}
\label{eqn:a4}
\E[\tau'] \leq \frac{\log M }{C} + a_4 \,.
\end{align}
Letting $a_5=a_3+a_4$ and combining \eqref{eqn:a3} and \eqref{eqn:a4} we have:
\begin{align}
\label{eqn:EllBoudnAtN}
\ell \leq (1-\P[\zeta_N])^{-1}\left(\frac{\log M}{C} + N\P[\tau' \geq N] + a_5 \right) \, .
\end{align} 
Let $C_\Delta = C - \Delta$. For a fixed $M$, we can take $N = \log M / C_\Delta$ for a constant $\Delta > 0$ such that: 
\begin{align}
\label{eqn:ErrorAtN}
\P[\zeta_N]\leq b_2\exp(-N b_3)\,. 
\end{align}
Again by noting that $S_n$ is a sum of i.i.d. r.v.'s with mean $C$, for $NC > \log M$ we have by Chernoff inequality that: 
\begin{align}
\P[\tau' \geq N] & = \P[S_N < \log M]
\\
&\leq b_0 \exp\left\{-b_1 N\left(C - \frac{\log M }{N}\right)\right\} 
\\ \label{eqn:ChernoffAtN}
& = b_0 \exp\left\{-b_1\Delta\frac{\log M}{C_\Delta}\right\}.
\end{align}
Combining \eqref{eqn:EllBoudnAtN}, \eqref{eqn:ErrorAtN} and \eqref{eqn:ChernoffAtN}  we have the following for $\ell$:
\begin{align}
\label{eqn:LogScaling1}
\ell \leq \left( \frac{\log M}{C} 
+ \frac{ b_0\log M}{C_\Delta M^{b_1\Delta/C_\Delta}}+a_5\right) (1-\P[\zeta_N])^{-1}\,.
\end{align}
Notice that we are only interested in the first two terms of the expansion $(1-x)^{-1} = 1 + x + x^2 + \dots$ on $[0,1)$. Thus
\begin{align}
\ell &\leq \frac{\log M}{C} + \frac{ b_0\log M}{C_\Delta M^{b_1\Delta/C_\Delta}}+ \frac{ b_2\log M}{C(M^{b_3/C_\Delta}-b_2)} + a_6
\end{align}
for some $a_6> 0$. Hence for $M$ large enough we have \eqref{eqn:ellExpansion1}.
\end{proof}


An expansion of $\log M_t^*(\ell, N)$ requires $N$ growing with $\ell$. The components of the correction term in Thm. \ref{thm:VLFTExpandFinite}, however, depend on both $N$ (as  $\log M / C_\Delta$) and $M$. Indeed for a fixed $\ell$, all $M$ satisfying \eqref{eqn:ellExpansion1} and \eqref{eqn:ellExpansion2} are achievable.  The argument we make below is that for any fixed constant $c_0 > 0$, there is an $\ell_0$ that depends logarithmically on $c_0^{-1}$ such that the expansion $\log M_t^* \geq C \ell - c_0$ is true for all $\ell \geq \ell_0$. 
We first invoke the converse for an $(\ell, M, \infty, 1, 0)$ VFLT code:
\begin{theorem}[\cite{PolyIT11}, Thm. 11] \label{thm:poly11}
Given a stationary DMC with capacity $C$ we have the following for an $(\ell, M,\infty, 1, 0)$ VLFT code:
\begin{align}
\log M_t^* \leq \ell C + \log(\ell+1) + \log e\,.
\end{align}
\end{theorem}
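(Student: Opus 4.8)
The plan is to prove the converse by an exact information identity followed by two independent bounds: one charging the cost of communicating the random stopping time, and one charging the channel's information‑transfer ability. First I would exploit the zero‑error hypothesis. Since $W$ is uniform on $\{1,\dots,M\}$ and independent of the common randomness $U$, and since the final estimate $\hat W = g_\tau(U,Y^\tau)$ equals $W$ almost surely (with the infinitely reliable termination symbol revealing $\tau$ to the receiver), we have $H(W\mid U,Y^\tau,\tau)=0$, so that
\begin{align}
\log M = H(W) = I(W; Y^\tau,\tau \mid U).
\end{align}
I would then split this into a stopping‑time contribution and an observation contribution,
\begin{align}
I(W; Y^\tau,\tau \mid U) \le H(\tau) + I(W; Y^\tau \mid \tau, U).
\end{align}

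The first term is handled by a maximum‑entropy argument: $\tau$ is a positive‑integer‑valued random variable with $\E[\tau]\le\ell$, and since the geometric law maximizes entropy under a mean constraint, $H(\tau)\le \log(\ell+1)+\log e$. This is exactly where the characteristic logarithmic penalty originates — it is the price of telling the decoder when to stop. For the second term I would build a supermartingale from the capacity‑achieving output distribution $P_Y^*$. Writing $j(x;y)=\log\frac{dP_{Y|X}(y|x)}{dP_Y^*(y)}$, the saddle‑point (KKT) property of capacity gives $\E[j(X_n;Y_n)\mid \mc{F}_{n-1}] = D(P_{Y\mid X_n}\Vert P_Y^*)\le C$, so $\{\sum_{i=1}^n j(X_i;Y_i)-nC\}_n$ is a supermartingale regardless of the (feedback‑induced) input law. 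Because $X_n=f_n(U,W,Y^{n-1})$ is a deterministic function, the per‑symbol identity $I(W;Y_n\mid Y^{n-1},U)=I(X_n;Y_n\mid Y^{n-1},U)\le C$ holds; chaining it over the stopped horizon — equivalently, applying optional stopping to the above supermartingale — yields $I(W;Y^\tau\mid\tau,U)\le C\,\E[\tau]\le C\ell$. Combining the two pieces gives $\log M \le \ell C + \log(\ell+1)+\log e$, and since the code and $M$ were arbitrary this bounds $M_t^*$.

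The hard part will be the channel term, for two related reasons. First, the stopping time $\tau$ depends on $W$ — precisely the VLFT feature that the transmitter terminates using its knowledge of the message — so the reduction over a random horizon must be justified carefully rather than by a naive chain rule; the clean route is to run the chain rule inside the expectation against the events $\{\tau\ge i\}$ and recognize the result as the stopped sum $\E[\sum_{i=1}^\tau j(X_i;Y_i)]$. Second, invoking Doob's optional stopping theorem requires a regularity condition, which is where the standing assumption that $i(X;Y)$ is essentially bounded enters: it supplies bounded increments and uniform integrability so that $\E[\sum_{i=1}^\tau j(X_i;Y_i)-C\tau]\le 0$ holds for the a.s.-finite $\tau$. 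Verifying these two points rigorously is the crux; by comparison the entropy bound on $\tau$ and the identity $\log M = I(W;Y^\tau,\tau\mid U)$ are routine.
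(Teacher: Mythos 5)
The paper does not prove this statement---it is quoted directly from \cite{PolyIT11} as a known converse---so there is no in-paper proof to compare against; your reconstruction is essentially the argument of that reference. The decomposition $\log M = I(W;Y^\tau,\tau\mid U)\le H(\tau)+I(W;Y^\tau\mid\tau,U)$, the geometric maximum-entropy bound $H(\tau)\le\log(\ell+1)+\log e$, and the bound $I(W;Y^\tau\mid\tau,U)\le C\,\E[\tau]$ via optional stopping of the supermartingale built from the capacity-achieving output distribution are exactly the ingredients used there, and you correctly identify the two delicate points (the $W$-dependent stopping time and the uniform integrability needed for optional stopping, supplied by the bounded-information-density assumption). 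I see no gap.
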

Combining Thms. \ref{thm:VLFTExpandFinite} and \ref{thm:poly11} we have the following:
\begin{corollary}
\label{cor:ScalingForEll}
For an $(\ell, M, N, 1, 0)$ repeated VLFT code with $N = (1+\delta) \ell$ and a proper choice of $\delta > 0$, we have the following for a stationary DMC with capacity $C$:
\footnote{As opposed to the expression in \cite{PolyIT11}, we use a minus sign for $O(1)$ term to make the penalty clear.}
\begin{align}
\log M_t^*(\ell, N, I) \geq \ell C - O(1)\,.
\end{align}
\end{corollary}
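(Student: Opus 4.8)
The plan is to invert the achievability expansion of Theorem~\ref{thm:VLFTExpandFinite} and to use the converse of Theorem~\ref{thm:poly11} to re-express the block-length constraint in terms of $\ell$ rather than $M$. Theorem~\ref{thm:VLFTExpandFinite} is naturally parametrized by $M$: once $M$ is fixed and $N$ is set to $\log M/C_\Delta$, the expected latency obeys $\ell \leq \log M/C + R(M)$, where $R(M)$ collects the correction terms in \eqref{eqn:ellExpansion2}. The corollary, by contrast, fixes $\ell$ and the block length $N=(1+\delta)\ell$ and asks for the largest achievable $M$. Reconciling these two viewpoints is the core of the argument.

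First I would observe that the two $M$-dependent terms in \eqref{eqn:ellExpansion2} decay to zero as $M\to\infty$ (the power of $M$ in each denominator dominates the $\log M$ in the numerator), so $R(M)$ is bounded by a constant $K$ for all sufficiently large $M$, with only the term $a$ surviving in the limit. Rearranging $\ell\leq \log M/C + K$ then gives the inverted bound $\log M \geq C\ell - CK = C\ell - O(1)$, which is the achievability content of the corollary. The threshold on $M$, equivalently on $\ell$ since $\log M\approx C\ell$, needed to bring the decaying part of $R(M)$ below any prescribed level grows only logarithmically in the inverse of that level, which accounts for the claimed logarithmic dependence of $\ell_0$ on $c_0^{-1}$.

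Next I would verify that the budget $N=(1+\delta)\ell$ indeed covers the block length $\log M/C_\Delta$ demanded by Theorem~\ref{thm:VLFTExpandFinite}, and this is where the converse enters. Theorem~\ref{thm:poly11} gives $\log M \leq C\ell + \log(\ell+1) + \log e$ for any admissible code, so $\log M/C_\Delta \leq [C\ell + \log(\ell+1)+\log e]/C_\Delta$. Dividing by $\ell$ and letting $\ell$ grow, the right-hand side tends to $C/C_\Delta$, so any fixed $\delta > \Delta/(C-\Delta)$ makes $(1+\delta)\ell \geq \log M/C_\Delta$ for all large $\ell$, the logarithmic slack $[\log(\ell+1)+\log e]/\ell$ being absorbed once $\ell\geq\ell_0$. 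Since $\Delta>0$ is free and fixes $C_\Delta$, this is the ``proper choice of $\delta$.'' A brief check that enlarging the block length beyond $\log M/C_\Delta$ does not worsen the latency bound, since the terms $N\P[\tau'\geq N]$ and $\P[\zeta_N]$ appearing in the proof of Theorem~\ref{thm:VLFTExpandFinite} only shrink as $N$ grows, then lets $\ell\leq \log M/C + K$ persist at $N=(1+\delta)\ell$ and yields $\log M_t^*(\ell,N,I)\geq C\ell - O(1)$.

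I expect the main obstacle to be precisely this reconciliation of parametrizations: Theorem~\ref{thm:VLFTExpandFinite} ties $N$ to $\log M$, whereas the corollary ties $N$ to $\ell$, and closing the loop without circularity requires the converse's a priori upper bound $\log M\lesssim C\ell$ to guarantee that the fixed budget $(1+\delta)\ell$ is never exceeded by the required $\log M/C_\Delta$. A secondary, more routine point is confirming that the irreducible constant $a$, arising from the overshoot of $S_n$ past $\log M$ together with the convergent tail sum bounded in \eqref{eqn:a3}, is all that remains in the gap, so that the correction is genuinely $O(1)$ and uniform in $\ell$.
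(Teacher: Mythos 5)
Your proof is correct and reaches the stated conclusion, but it instantiates the argument differently from the paper. You treat Theorem~\ref{thm:VLFTExpandFinite} as a black box at $N=\log M/C_\Delta$, invert $\ell\leq \log M/C+K$, and then use the converse of Theorem~\ref{thm:poly11} only to certify that a \emph{constant} $\delta>\Delta/C_\Delta$ makes $(1+\delta)\ell\geq \log M/C_\Delta$, transferring the bound to the larger block length by a domination argument. The paper instead goes back to the intermediate inequality \eqref{eqn:EllBoudnAtN}, which holds for every $N$, sets $N=(1+\delta)\ell$ directly, and uses the converse to show $\log M/N\leq C-\delta'$ for any $\delta>\frac{\log(\ell+1)+\log e}{\ell C}$; the Chernoff bounds on $\P[\tau'\geq N]$ and $\P[\zeta_N]$ are then re-derived with exponents linear in $\ell$, and the rearrangement yields $\log M\geq \ell C-O(1)$. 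The practical difference is the admissible scaling of $N$: your route requires $\delta$ bounded away from zero (so $N$ exceeds $\ell$ by a constant factor tied to $\Delta$), whereas the paper's route lets $\delta$ vanish with $\ell$, i.e.\ $N=\ell+\Omega(\log\ell)$ suffices --- and that sharper scaling is exactly what Theorem~\ref{thm:MainAsympResult2} and the concluding remarks rely on. Two minor points to tighten: the claim that $N\P[\tau'\geq N]$ ``only shrinks'' as $N$ grows is not literally monotone (the prefactor $N$ grows); what you actually need, and what holds, is that the Chernoff upper bound on it is exponentially small once $NC-\log M\geq N\Delta>0$, so re-instantiating \eqref{eqn:EllBoudnAtN} at the larger $N$ is what justifies the transfer. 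Also note that the converse is not strictly needed for your step, since you choose $M$ yourself with $\log M\leq C\ell$; invoking Theorem~\ref{thm:poly11} is a clean way to package that a priori bound, but the argument is not circular either way.
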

\begin{proof}
We first choose $N$ to scale with $\ell$ with a factor $\delta$:
\begin{align}
N = (1+\delta)\ell\,.
\end{align}
Then by the converse we have:
\begin{align}
\frac{\log M}{N} &\leq C + \frac{\log(\ell+1)+\log e - \delta\ell C}{(1+\delta)\ell}
\\
&\leq C-\delta' \, .
\end{align}
The term $\delta'$ on the right is positive by setting:
\begin{align}
\delta > \frac{\log(\ell+1)+\log e}{\ell C}\,.
\end{align}
Again by Chernoff inequality we have:
\begin{align}
\P[\tau' \geq N] &= \P[S_N  < \log M  ]
\\
&\leq \P[S_N - NC < -N\delta' ]
\\
&\leq b_0'\exp\left\{-\ell(1+ \delta)b_1'\delta' \right\}\, .
\end{align}
Since $\delta$ is chosen such that $\log M / N$ is less than capacity, we also have \eqref{eqn:ErrorAtN}. By reordering \eqref{eqn:EllBoudnAtN} we have for some $b_2', b_3' > 0$ such that:
\begin{align}
\frac{\log M}{C} 
\geq \ell\left[1-b_2e^{-\ell(1+\delta)b_3} - (1+\delta)b_0'e^{-b_2'\ell}\right] - b_3' \,,
\end{align}
which implies $\log M_t^* \geq \ell C - O(1)$ for large enough $\ell$.
\end{proof}

To conclude the discussion of the penalty associated with finite block-length, we comment that $N$ only needs to be scaled properly, i.e. $(1+\delta)\ell$ for $\delta$ decreasing with $\ell$, to obtain the infinite-block-length expansion of $M^*_{t}(\ell, \infty)$ provided in \cite{PolyIT11}.  Thus, the restriction to a finite block-length $N$ does not restrict the asymptotic performance if $N$ is selected properly with respect to $\ell$.  However, the constant penalty is indeed different for infinite and finite $N$, which might not be negligible in the short-block-length regime. Still, our numerical results in Section \ref{sec:NumResults} indicate that relatively small values of $N$ can yield good results for short block-lengths.

\subsection{Limited, Regularly-Spaced, Decoding Attempts}
\label{sec:VLFT_TimeLimit1} 
This subsection investigates $(\ell, M, N, I, \epsilon)$ VLFT codes with $N=\infty$ but decoding attempted only at specified, regularly-spaced, symbols ($I>1$).  The first decoding time occurs after $n_1$ symbols (which could be larger than $I$) so that the decoding attempts are made at the times $n_j = n_1 + (j - 1)I$.  The relevant information density process $i(X^{n_j}; Y^{n_j})$ is on the subsequence $n_j = n_1 + (j - 1)I$.  The main result here is that the constant penalty now scales linearly with $I$:
\begin{theorem}
For an $(\ell, M, N, I, 0)$ VLFT code with uniform increments $I$ and $N = \infty$ we have the following expansion for a stationary DMC with capacity $C$:
\begin{align}
\log M^*_t(\ell, \infty, I) \geq \ell C - O(I)\,.
\end{align}
\end{theorem}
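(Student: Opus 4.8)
The plan is to mirror the random-walk argument behind Theorem~\ref{thm:VLFTExpandFinite}, but to sample the information-density walk $S_n = i(X^n;Y^n)$ only on the subsequence of decoding times $n_j = n_1 + (j-1)I$. Because decoding (and therefore termination) is attempted only at the times $n_j$, the stopping time $\tau$ takes values in $\{n_1,n_2,\dots\}$, so $\P[\tau>n]$ is constant on each block $\{n_j,\dots,n_{j+1}-1\}$. Writing $\E[\tau]=\sum_{n\ge 0}\P[\tau>n]$ and grouping the summands block by block, each of length $I$, gives
\begin{align}
\E[\tau] \le n_1 + I\sum_{j\ge 1}\P[\tau>n_j] \le n_1 + I\sum_{j\ge 1}\xi_{n_j}\,,
\end{align}
where the second inequality uses $\{\tau>n_j\}\subseteq\{\zeta_{n_j}\}$ together with the RCU bound $\P[\zeta_{n_j}]\le\xi_{n_j}$. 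Choosing the first decoding time $n_1=O(I)$, it then remains only to show $I\sum_{j\ge1}\xi_{n_j}\le \tfrac{\log M}{C}+O(I)$.

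First I would weaken each $\xi_{n_j}$ through \eqref{eqn:VLFT_DT} with $\gamma=M$ to obtain $\xi_{n_j}\le\E[\exp\{-[S_{n_j}-\log M]^+\}]$, and introduce the subsequence stopping time $\tau'' = \inf\{n_j : S_{n_j}\ge\log M\} = n_J$ with $J=\inf\{j\ge1: S_{n_j}\ge\log M\}$. Splitting the sum at $J$, the $J-1$ terms with $j<J$ satisfy $S_{n_j}<\log M$ and contribute $1$ each, while for $j\ge J$ the strong Markov property of the walk gives $[S_{n_j}-\log M]^+\ge[S_{n_j}-S_{n_J}]^+$, which is equal in distribution to $[i(X^{(j-J)I};Y^{(j-J)I})]^+$. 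The Chernoff estimate already used for Theorem~\ref{thm:VLFTExpandFinite}, namely $\E[\exp\{-[i(X^k;Y^k)]^+\}]\le a_1 e^{-k a_2}$, then bounds the tail $\sum_{j\ge J}$ by a constant: the geometric sum $1+a_1\sum_{k\ge1} e^{-kIa_2}$ is bounded uniformly in $I$ (indeed it decreases in $I$). Hence $\sum_{j\ge1}\xi_{n_j}\le \E[J]+a_3'$ for a constant $a_3'$.

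The remaining step is to control $\E[J]$ via Doob's optional stopping theorem applied to the bounded martingale $\{S_n-nC\}$ on the subsequence, yielding $\E[S_{\tau''}]=C\,\E[\tau'']$ exactly as in \eqref{eqn:DoobOpt}. Here the overshoot enters: at the previous decoding time we have $S_{n_{J-1}}<\log M$, and since the per-symbol jumps of the walk are bounded (by $a_4 C$, as in the proof of Theorem~\ref{thm:VLFTExpandFinite}), the increment over one block of $I$ symbols satisfies $S_{n_J}-S_{n_{J-1}}\le I a_4 C$, so $\log M\le S_{\tau''}\le \log M+I a_4 C$. Therefore $\E[\tau'']\le \tfrac{\log M}{C}+I a_4$, and because $\tau''=n_1+(J-1)I$ this gives $I\,\E[J]\le\tfrac{\log M}{C}+O(I)$. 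Combining with the previous paragraph, $I\sum_j\xi_{n_j}\le\tfrac{\log M}{C}+O(I)$, whence $\ell\le\E[\tau]\le\tfrac{\log M}{C}+O(I)$, and rearranging yields $\log M_t^*(\ell,\infty,I)\ge \ell C - O(I)$.

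The main obstacle is precisely the overshoot control. In the $I=1$ case, crossing $\log M$ produces only an $O(1)$ overshoot, but checking the threshold only once every $I$ symbols lets the walk run up to $I$ bounded increments past $\log M$ before the crossing is observed, and this is exactly what forces the penalty to scale linearly in $I$ rather than remaining constant. One must also check that the geometric contribution through $a_3'$ does not grow with $I$ (it does not), so that the entire $O(I)$ term is attributable to the overshoot and the choice $n_1=O(I)$, and confirm $J\ge1$ almost surely (guaranteed by the positive drift $C$) so that $\tau''$ is well-defined on the subsequence.
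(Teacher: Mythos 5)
Your proposal is correct and follows essentially the same route as the paper's proof: the block-by-block grouping $\ell \le n_1 + I\sum_j \P[\zeta_{n_j}]$, the DT weakening of the RCU bound, the subsequence stopping time (your $J$ is the paper's $\tau_0$), the strong-Markov/Chernoff bound on the post-crossing tail, and Doob's optional stopping with the overshoot bounded by $I a_4 C$ are all exactly the steps the paper takes. Your added remarks isolating the overshoot over one block of $I$ symbols as the source of the linear-in-$I$ penalty are consistent with the paper's justification of \eqref{eqn:UnifIncBd2}.
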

\begin{proof}
Consider the same random coding scheme as in Thm. \ref{thm:FiniteFV}, but now the auxiliary stopping time is given as $n_{\tau_0} = n_1 + (\tau_0-1)I$ where $\tau_0$ is also a stopping time given as:
\begin{align}
 \tau_0 = \inf\{j > 0: S_{n_j} = i(X^{n_j}; Y^{n_j}) \geq \log M\}\,.
\end{align}
The rest is similar to the proof of Thm.~\ref{thm:FiniteFV}:
\begin{align}
\ell &\leq n_1 + I\sum_{j = 1}^{\infty}\P[\zeta_{n_j}]
\\
&\leq n_1 + I\sum_{j = 1}^{\infty}\E\left[\exp\left\{-\left[S_{n_j} - \log M\right]^+\right\}\right]
\\
\nonumber
&\leq n_1 + I\E[\tau_0 - 1]
\\ 
&\quad+ \sum_{k = 0}^{\infty}\E\left[\exp\left\{-\left[S_{n_{\tau_0 + k}} - \log M\right]^+\right\}\right]
\\
&\leq \E[n_{\tau_0}] 
+ I\sum_{k = 0}^{\infty}\E\left[\exp\left\{-[i(X^{n_k};Y^{n_k})]^+\right\}\right]
\\
\label{eqn:UnifIncBd1}
&\leq \E[n_{\tau_0}] + I a_3
\\
\label{eqn:UnifIncBd2}
&\leq \frac{\log M}{C} + I a_4 + I a_3\,,
\end{align}
where \eqref{eqn:UnifIncBd1} follows by applying Chernoff inequality and \eqref{eqn:UnifIncBd2} is because the jumps of $i(X^{n_j};Y^{n_j})$ is bounded by $I\times a_4 C$ for some $a_4 > 0$. 
Reordering the equations gives the result. 
\end{proof}

In view of the theorem, the increment $I$ can grow slowly, e.g. $I = O(\log \ell)$ and can still permit an expected rate that approaches $C$ without the dispersion penalty. In the non-asymptotic regime, however, the penalty might not be negligible.   Our numerical results in Section~\ref{sec:NumResults} indicate that $I = \lceil\log_2 \log_2 M\rceil$ yields good results for short block-lengths.

\subsection{Finite Block-Length and Limited Decoding Attempts}
\label{sec:VLFT_TimeLimit2} 
This subsection investigates $(\ell, M, N, I, 0)$ (repeated) VLFT codes with {\em both} finite $N$ and $I>1$.   When these two limitations  are combined, a key parameter is $m$, the number of decoding attempts before the transmission process must start from scratch if successful decoding has not yet been achieved. The main result follows from combining the results of Sections \ref{sec:VLFT_FiniteLength} and \ref{sec:VLFT_TimeLimit1}.  Once $n_1$, $N$ and $I$ are specified, the value of $m$ is implied.  Specifically, we have the following theorem: 
\begin{theorem}
\label{thm:MainAsympResult1}
For an $(\ell, M, N, I, 0)$ VLFT code with $N = \Omega(\log M)$, we have the following for a stationary DMC with capacity $C$:
\begin{align}
\ell&\leq (1+\P[\zeta_N])^{-1}\frac{\log M}{C} +  \P[\tau_0 \geq m] + O(I)
\\
&\leq \frac{\log M}{C} + O(I) \,,
\end{align}
where $\tau_0$ is the stopping time in terms of the number of decoding attempts up to and including the first success. 
\end{theorem}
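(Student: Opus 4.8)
The plan is to fuse the two single-constraint arguments already developed: the renewal (geometric-restart) structure behind the finite-$N$ bound of Theorem~\ref{thm:FiniteFV}, and the subsampled random-walk analysis behind the limited-decoding bound of Section~\ref{sec:VLFT_TimeLimit1}. First I would set up the combined ``repeated'' scheme explicitly: decoding is attempted only at the times $n_j = n_1 + (j-1)I$, and since each round is truncated at block-length $N$, the number of attempts per round is the largest $m$ with $n_m \leq N$, so $m$ is fixed once $n_1$, $N$ and $I$ are specified. The rounds are i.i.d.\ and a round terminates without restart with probability $1 - \P[\zeta_N]$; the event of exhausting a round is exactly $\{\tau_0 \geq m\}$, where $\tau_0$ counts decoding attempts to the first success. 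By the renewal decomposition the expected latency factors as the expected time spent in one round divided by the per-round success probability, producing the geometric prefactor $(1-\P[\zeta_N])^{-1}$ exactly as in \eqref{eqn:FiniteFV}.

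Next I would bound the expected time consumed within one round by importing the subsequence analysis of Section~\ref{sec:VLFT_TimeLimit1}. Working on the subsampled walk $S_{n_j}$, I introduce the truncated stopping time $\tau_0 \wedge m$, where $\tau_0 = \inf\{j > 0 : S_{n_j} \geq \log M\}$. Doob's optional stopping theorem \eqref{eqn:DoobOpt} gives $\E[S_{n_{\tau_0 \wedge m}}] = C\,\E[n_{\tau_0 \wedge m}]$, which yields the leading term $\tfrac{\log M}{C}$; because the overshoot now straddles a full increment, the bounded-jumps assumption bounds it by $I a_4 C$, contributing an $O(I)$ correction rather than the $O(1)$ of the $I=1$ case. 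The remaining sum of weakened RCU terms $\sum_k \E[\exp\{-[S_{n_k}]^+\}]$ is controlled, as in \eqref{eqn:UnifIncBd1}, by the Chernoff estimate $\E[e^{-[i(X^k;Y^k)]^+}] \leq a_1 e^{-k a_2}$ together with the tilting identity \eqref{eqn:tilting}, contributing a further $I a_3$.

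Finally I would combine the two pieces and show that the additive residuals decay. Under $N = \Omega(\log M)$ I can fix the effective rate $\log M / N$ strictly below $C$, so that the restart probability satisfies both the exponential error bound $\P[\zeta_N] \leq b_2 e^{-N b_3}$ of \eqref{eqn:ErrorAtN} and the Chernoff estimate $\P[\tau_0 \geq m] = \P[S_{n_m} < \log M] \leq b_0 e^{-b_1 N(C - \log M / N)}$ paralleling \eqref{eqn:ChernoffAtN}; both vanish as $M$ grows. Expanding the prefactor as $(1-\P[\zeta_N])^{-1} = 1 + O(\P[\zeta_N])$ and absorbing the decaying terms into the $O(I)$ slack then leaves $\ell \leq \tfrac{\log M}{C} + O(I)$.

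The main obstacle I anticipate is the bookkeeping at the interface of the two constraints, namely ensuring that truncating the walk at the $m$-th attempt (forced by finite $N$) does not corrupt the optional-stopping identity, and that the subsampled overshoot is genuinely $O(I)$ rather than $O(N)$. The clean way is to verify that on $\{\tau_0 < m\}$ the walk has crossed $\log M$ with overshoot at most $I a_4 C$ by bounded jumps, while the complementary event $\{\tau_0 \geq m\}$ feeds only into the renewal prefactor and a single vanishing additive term; care is required so that the penalty remains linear in $I$ and is not inflated by interaction with the restart mechanism.
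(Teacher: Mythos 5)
Your proposal is correct and follows essentially the same route as the paper: the paper likewise combines the renewal/restart decomposition of Theorem~\ref{thm:FiniteFV} (yielding the $(1-\P[\zeta_N])^{-1}$ prefactor) with the subsampled stopping time $\tau_0 = \inf\{j>0: S_{n_j}\geq \log M\}\wedge m$, splits on $\{\tau_0 < m\}$ versus $\{\tau_0\geq m\}$, bounds the overshoot by $Ia_4C$ via bounded jumps and the residual RCU sum by the Chernoff estimate, and then chooses $m$ (equivalently $N\geq \log M/C_\Delta$) so that $\P[\zeta_N]$ and $\P[\tau_0\geq m]$ decay and can be absorbed. The bookkeeping concern you flag at the end is handled in the paper exactly as you suggest, by routing $\{\tau_0\geq m\}$ into the single term $N\P[\tau_0\geq m]$ and keeping the overshoot analysis on $\{\tau_0<m\}$.
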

The proof is similar to Thm. \ref{thm:VLFTExpandFinite} and can be found in the appendix. The proof of Thm. \ref{thm:MainAsympResult2} now follows:
\begin{proof}[Proof of Thm. \ref{thm:MainAsympResult2}]
For an $(\ell, M, N, I, 0)$ VLFT code pick $N$ as follows:
\begin{align}
N = (1+\delta)\ell, \text{ where } \delta > \frac{\log(\ell+1)+\log e}{\ell C}\,.
\end{align}
The result follows by a similar argument as for Cor.~\ref{cor:ScalingForEll}. The restriction on the initial block-length $n_1$ only makes a constant difference. 
\end{proof}

\section{Numerical Results}
\label{sec:NumResults}
We give a numerical example of our results for a binary symmetric channel (BSC). 
For a BSC with transition probability $p$ we used the RCU bound in \cite{PolyIT10,PolyIT11}\footnote{We replace $(M-1)$ by $M$ for simplicity.}, which gives the following expression:
\begin{align*}
\xi_n \leq \sum_{t = 0}^{n} {n\choose t}p^t (1-p)^{n-t}\min\left\{1, M\sum_{j=0}^{t}{n\choose j} 2^{-n}\right\} \, .
\end{align*}

Fig.~\ref{fig:VLFT_FiniteBlock} shows the performance of VLFT codes over a BSC with $p = 0.0789$ with $N=\infty$ $N = \ell + \Omega(\log\ell)$, and $N = \log M/C_\Delta$. Since $\ell$ scales linearly with $\frac{\log M}{C}$, for the case of $\ell + \Omega(\log\ell)$ we choose $N$ to scale as:
\begin{align}
N = \frac{\log M}{C} + a \log\left(\frac{\log M}{C}\right) + b\,,
\end{align}
where $a, b > 0$ are constants to be chosen numerically. The numerical examples presented here use $a = 10$, $b = 30$. We choose $\Delta = 0.3C$ and $0.4C$, which are about $43\%$ and $67\%$ longer, respectively, than the block-length that corresponds to capacity. In other words, $N = 1.43\log M/C$ and $N = 1.67\log M/C$ respectively. As expected latency increases in Fig.~\ref{fig:VLFT_FiniteBlock}, expected throughput for the finite-$N$ (repeated) VLFT codes converges to that of VLFT with $N = \infty$ before expected latency has reached $200$ symbols. The penalty of $\Delta=0.3C$ compared to $\Delta=0.4C$ is only visible when $M$ is small. 

VLFT codes can have expected throughput higher than the original BSC capacity because of the beneficial effect of the error-free termination symbol.  This effect becomes smaller as expected latency increases.
	 \begin{figure}[t]
	 \centering
    \includegraphics[width=0.5\textwidth]{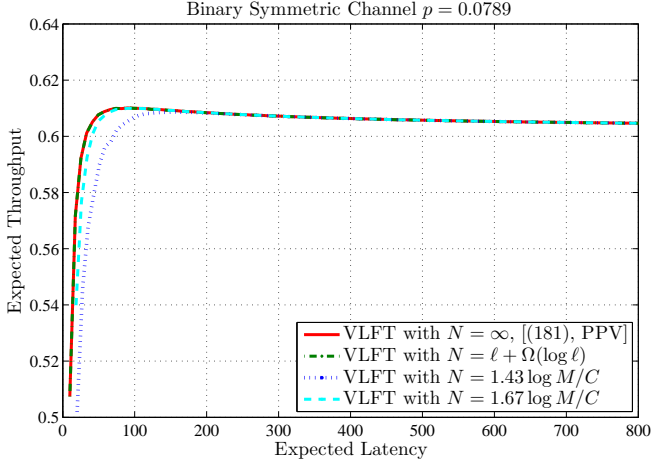}
    \caption{Performance comparison of VLFT code achievability based on RCU bound with different codebook block-lengths.}
    \label{fig:VLFT_FiniteBlock}
    \end{figure}~
	 \begin{figure}[t]
	 \centering
    \includegraphics[width=0.5\textwidth]{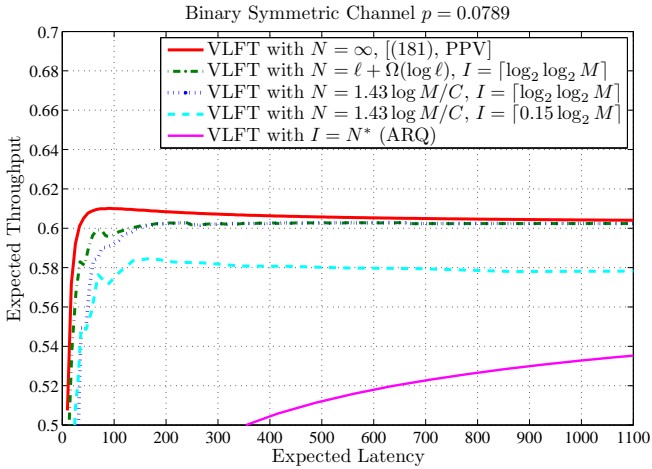}
    \caption{Performance comparison of VLFT code achievability based on RCU bound with uniform increment and finite-length limitations.}
    \label{fig:VLFT_UnifInc}
    \end{figure}

Fig.~\ref{fig:VLFT_UnifInc} shows the performance of the repeated VLFT code with various decoding-time increments $I$.  As in \eqref{eq:Thm2}, when $I$ grows linearly with $\log M$ (i.e. $\lceil 0.15\log_2 M\rceil$)  then there is a constant gap from the $I=1$ case.  However, if $I$ grows as $\lceil\log_2 \log_2 M\rceil$ then the gap from the $I=1$ case decreases as expected latency increases.  ARQ performance (in which $I=N*$, an optimized block-length) is also shown in the figure, which reveals a considerable performance gap from even the most constrained VLFT implementation we implemented.


\section{Conclusion}
\label{sec:Conclusion}
This paper shows that the achievable performance of a VLFT code is mostly preserved when the block-length of the underlying code is restricted to be finite and decoding attempts are limited to regularly spaced decoding times.  Specifically, if block-length $N = \ell + \Omega(\log\ell)$ and $I = O(\log\ell)$ the optimal expansion of $M_t^*$ is achieved.  

The finite-block-length results for VLFT codes suggest that it is not necessary to use an infinitely long codebook or even a very large one. Numerical results show that a base code rate that is $67\%$ of the capacity can closely approach performance of a VLFT code with an infinite block-length.  Numerical results also show that decoding after every $\log_2\log_2 M$ symbols is almost as good as decoding at every symbol. 
 
\setcounter{equation}{85}
\begin{figure*}[!t]
\normalsize
\begin{align}
\label{eqn:6.1}
&(1-\P[\zeta_{N}]) \ell
\leq n_1 + I\sum_{j = 1}^{m-1}\E[\exp\{-[S_{n_j} - \log M]^+\}]
\\
&=  \E[n_1 + (m-1)I;\tau_0 \geq m] + \E[n_1+(\tau_0-1)I;\tau_0 < m]
+ I\E\left[\sum_{k = 0}^{m-1-\tau_0}\exp\left\{-\left[S_{n_{k+\tau_0}} - \log M\right]^+\right\};\tau_0 < m\right]
\\
\label{eqn:6.3}
& \leq \left[n_1 + (m-1)I\right] \P[\tau_0 \geq m] + I\E[(\tau_0-1) ;\tau_0 < m]
+ I\E\left[\sum_{k = 0}^{m-1-\tau_0}\exp\left\{-\left[S_{n_{k+\tau_0}} - \log M\right]^+\right\};\tau_0 < m\right]
\\
& \leq N\P[\tau_0 \geq m] + \E[n_{\tau_0}; \tau_0 < m]
+ I\E\left[\sum_{k = 0}^{m-1}\exp\left\{-\left[S_{n_{k}}\right]^+\right\};\tau_0 < m\right]
\\
\label{eqn:6.4}
&\leq \frac{\log M}{C} + N\P[\tau_0 \geq m] + O(I).
\end{align}
\hrulefill
\end{figure*}
\setcounter{equation}{69}
\section{Appendix}
\label{sec:Addendix}
\begin{proof}[Proof of Thm. \ref{thm:FiniteVLFT}]
Consider a random codebook $\mc{C}_N = \{\ms{C}_1, \dots, \ms{C}_M\}$ with $M$ codewords of length-$N$ and codeword symbols independent and identically distributed according to $P_X$. To construct a VLFT code consider the following $(U, f_n, g_n, \tau)$:
The common random variable
\begin{align}
U \in \mc{U} = \overbrace{\mc{X}^N\times\dots\times\mc{X}^N}^{M \text{times}}.
\end{align}
is distributed as:
\begin{align}
U \sim \prod_{j=1}^M P_{X^N}.
\end{align}
A realization of $U$ corresponds to a deterministic codebook  $\{\ms{c}_1, \dots, \ms{c}_M\}$. 
Let $x(n)$ denote the $n$th coordinate of a vector $x$. The sequence $(f_n, g_n)$ is defined as
\begin{align}
f_n(U, W) &= \ms{C}_W(n)
\\
g_n(U, W, Y^n) &=  \arg\max_{j = 1, \dots, M} i(\ms{C}_W(n); Y^n)
\end{align}
and the stopping time $\tau$ is defined as:
\begin{align}
\tau &= \inf\{n: g_n(U,Y^n) = W\}\wedge N\,.
\end{align}
The $n$th marginal error event $\zeta_n$ is given as:
\begin{align}
\zeta_n = \left\{\bigcup_{j \ne W}i(\ms{C}_j^n; Y^n) > i(\ms{C}_w^n; Y^n)\right\}\,.
\end{align}
Following \eqref{eqn:SumsOfPtau}-\eqref{eqn:SumsOfJoints} we have 
\begin{equation}
\E[\tau] = \sum_{n = 0}^{N-1}  \P[\tau > n] \leq \sum_{n = 0}^{N-1}  \P[\zeta_n] \leq \sum_{n = 0}^{N-1} \xi_n\,.
\end{equation}
As in \cite[(151)-(153)]{PolyIT11}, the last inequality follows from union bound and the fact that a probability measure is upper bounded by $1$.
With a similar bounding technique, the error probability can be upper bounded as:
\begin{align}
\P[g_\tau(U, Y^\tau)\ne W]
&= \P[g_N(U, Y^N) \ne W, \tau = N]
\\
&= \P\left[\bigcap_{j = 1}^N \zeta_j\right]
\\
&\leq \P[\zeta_N]
\\
&\leq  \xi_N\,.
\end{align}
In other words, the error probability is upper bounded by the error probability of the base code $\mc{C}_N$.
\end{proof}

\begin{proof}[Proof of Thm. \ref{thm:FiniteFV}]
The proof follows from random coding and the following modification of the triplet $(f_n, g_n, \tau)$ of Thm. \ref{thm:FiniteVLFT}: For $k = 1, 2, \dots $ let $(f'_n, g'_n)$ be defined as:
\begin{align}
\nonumber
f'_n(U,W) &= \begin{cases} 
   f_n(U, W) &\text{ if } n \leq N  \\
   f_{n-kN}(U, W)    &\text{ if } kN < n \leq (k+1)N
 \end{cases}
\\
\nonumber
g'_n(U,Y^n) &= \begin{cases} 
   g_n(U,Y^n)  &\text{ if } n \leq N  \\
   g_{n-kN}(U,Y_{kN+1}^n)   &\text{ if }  kN < n \leq (k+1)N
\end{cases}
\end{align}
Let the new stopping $\tau'$ be defined as:
\begin{align}
\tau' &= \inf\{n: g'_n(U,Y^n) = W\}\,.
\end{align}
The zero-error part is obvious from the definition of the stopping time $\tau'$. As mentioned above, the new encoder/decoder sequence $(f'_n, g'_n)$ is simply an extension of the VLFT code in Thm.~\ref{thm:FiniteVLFT} by performing an ARQ-like repetition. The expectation of $\tau'$ is thus given as:
\begin{align}
\E[\tau'] &=\sum_{n = 1}^{N-1} \P\left[\bigcap_{j = 1}^{n}\zeta_j\right] + \P\left[\bigcap_{j = 1}^{N}\zeta_j\right]\E[\tau']
\\
&\leq \sum_{n = 1}^{N-1} \P[\zeta_n] + \P[\zeta_N]\E[\tau'] \,,
\end{align}
which implies that:
\begin{equation}
 \E[\tau'] \leq (1-\P[\zeta_N])^{-1}\sum_{n = 1}^{N-1} \P[\zeta_n] \,.
\end{equation}
Applying RCU bound on $\P[\zeta_n]$ for each $n$ finishes the proof.  
\end{proof}
\begin{proof}[proof of Thm. \ref{thm:MainAsympResult1}]
Consider the FV code as in Thm.~\ref{thm:FiniteFV} 
but with an initial-block-length $n_1$, an uniform increment $I$ and a finite $m$. The finite block-length is given by $N = n_m$ where $n_j = n_1 + (j-1)I$. Define the auxiliary stopping time as:
\begin{align}
\tau_0 = \inf\{j > 0: i(X^{n_j}; Y^{n_j}) \geq \log M\} \wedge m \,.
\end{align}
Similar to Thm. \ref{thm:FiniteFV} we have \eqref{eqn:6.1} to \eqref{eqn:6.4}, shown at the top of the page. 
Now we are left to choose the scaling of $m$. Using a similar choice as in the proof of Thm. \ref{thm:VLFTExpandFinite}:
\setcounter{equation}{90}
\begin{align}
m = \left\lceil{\left(\frac{\log M}{IC_\Delta} - 
\frac{n_1}{I}\right) + 1}\right\rceil
\end{align}
which yields 
\begin{align}
\label{eqn:deltaforM}
N = n_1 + (m-1)I \geq \frac{\log M }{C_\Delta}\,.
\end{align}
Rest of the proof follows as in the proof of Thm. \ref{thm:VLFTExpandFinite}.
\end{proof}


\bibliographystyle{IEEEtran}
\bibliography{IEEEabrv,Feedback_Journal}

\end{document}